\newtheorem{theorem}{Theorem}
\newtheorem{ass}[theorem]{Assumption}
\newtheorem{lemma}{Lemma}
\newtheorem{prob}{Problem}
\newcommand{\A}{{\cal A}}
\newcommand{\B}{{\cal B}}
\newcommand{\C}{{\cal C}}
\newcommand{\D}{{\cal D}}
\newcommand{\Hm}{{\cal H}}
\definecolor{accessblue}{cmyk}{1, 0.3, 0, 0.2}
\definecolor{greycolor}{cmyk}{0,0,0,.8}
\def\BibTeX{{\rm B\kern-.05em{\sc i\kern-.025em b}\kern-.08em
T\kern-.1667em\lower.7ex\hbox{E}\kern-.125emX}}
\begin{document}

\title{Cyclic Reformulation Based System Identification for Periodically Time-varying Systems}
\author{
Hiroshi Okajima, Yusuke Fujimoto, Hiroshi Oku and Haruto Kondo
}









\maketitle

\section{INTRODUCTION}\label{section1}

System identification has been an active and well-established area of research for many years. The study of system identification has been presented in \cite{id00,id01,id-access1Robust,id0,id02,id03,id1,id2,id3,id4,id5,id6,id7,id8,id9,id10,id11auto,id11tac}. This research considers modeling through system identification targeting linear periodically time-varying systems (LPTV systems). A time-varying system refers to a system in which the internal parameters of the control object depend on time. While the analysis and design theories for linear time-invariant systems have been extensively studied, time-varying systems have become more challenging to analyze and design. Periodic time-varying systems, in which the parameters change periodically, are relatively easy to handle among such time-varying systems. Objects that have periodicity in their internal parameters exist in various fields. Specific research topics include the control of space tether systems, control of periodic disturbances in optical discs, and multi-agent systems for networks with periodic variations \cite{P3}. Additionally, it is possible to handle multi-rate systems as periodically time-varying systems \cite{P2}. The design methods for control systems for periodic time-varying systems are known in \cite{P1} as well. Periodically time-varying quantizers are used in communication bit rates \cite{bit}. The application range of periodically time-varying systems is comprehensive, and the importance of analysis and design for periodically time-varying systems is still recognized today.

Research on system identification for periodically time-varying systems has been conducted, and interesting research results have been obtained. In \cite{id1}, an identification method using periodic inputs was proposed for a periodically time-varying FIR model, and in \cite{id0}, ensemble identification using multiple input-output datasets was presented. In addition, system identification of periodically time-varying autonomous systems using the lifting method, which is a well-known time-invariant reformulation method, is considered in \cite{real}. Frequency domain system identification methods using periodic inputs are presented in such as \cite{id2,id11tac}. System identification of the periodically time-varying systems is one of the challenging problems to solve. Furthermore, research in a similar field has also been conducted, such as the identification of LPV models using periodic scheduling and inputs in such as \cite{id3,id11auto}. 

Motivated by the above research, we propose a cyclic identification algorithm for periodically time-varying plants. We use the cyclic reformulation\cite{cyc1,cyc2,cyc3,cyc4,cyc5}, which is a time-invariant technique, to deal with system identification of periodically time-varying systems. The periodically time-varying system to be identified is represented by a discrete-time linear state-space model by cyclic reformulation. Cycling re-arranges the signal and converts the parameters of the state-space model into a special structure, allowing the periodically time-varying system to be transformed into an equivalent time-invariant system representation with the same input/output behavior. Furthermore, the state-space model is obtained by using the subspace identification method under the cycled input/output signals. The subspace identification method is one of the identification methods for parametric models and is based on a state-space realization. Additionally, the appropriate state coordinate transformation, which is proposed in this paper, is performed on the obtained linear time-invariant state-space model to obtain the parameters of the original linear periodically time-varying system. This method allows for the determination of the parameters of a periodically time-varying system without any special input signal assumptions.

This paper is organized as follows. Section \ref{sec2} defines the state-space representation of the periodically time-varying system, which is the subject of system identification in this study. We also describe the representation method of cyclic reformulation, which is a time-invariant method used to handle periodically time-varying systems as time-invariant systems and provide specific numerical values and application examples. We also describe the properties of the cyclic reformulation. In section 3, we explain the basics of modeling and system identification, as well as the basic theory of the subspace identification method used in this study. In section 4, we propose a system identification algorithm for periodically time-varying systems. In section 5, we present the results of system identification performed according to the procedure shown in section 4 using numerical examples and verify the obtained model.

Notations: $I_n$ denotes an $n\times n$ identity matrix. $O_{n,m}$ denotes an $n\times m$ zero matrix. 
\section{Problem formulation}\label{sec2}
\subsection{Periodically time-varying system}
An $n$-order discrete-time linear periodically time-varying system $P$ is described by 
\begin{eqnarray}
x(k+1)&=&A_k x(k)+B_k (u(k)+w(k))\label{plant0}\\
y(k) &=& C_k x(k) + D_k u(k)+v(k)\label{plant02}
\end{eqnarray}
$u(k)\in {\bf{R}}^{m}$ is the control input at time $k$, $x(k)\in {\bf{R}}^{n}$ is the state, and $y(k)\in {\bf{R}}^{l}$ is the output. Furthermore, the elements of the input $u(k)$ are denoted as $u_1(k),\cdots,u_m(k)$, respectively. The matrices in the state equation are $A_k \in {\bf{R}}^{n \times n}$, $B_k \in {\bf{R}}^{n \times m}$, $C_k \in {\bf{R}}^{l \times n}$, $D_k \in {\bf{R}}^{l \times m}$, and $w(k)\in {\bf{R}}^{m}$ and $v(k)\in  {\bf{R}}^{l}$ are disturbances. The pair $(C_k, A_k)$ is assumed to be observable\cite{cyc2} for any time point. In addition, the pair $(A_k, B_k)$ is assumed to be controllable. Because the plant (\ref{plant0}), (\ref{plant02}) is observable and controllable at any time point, the following conditions hold for any $k$. 
\begin{eqnarray}
{\bf{rank}} \begin{bmatrix}C_k\\C_{k+1} A_{k}\\C_{k+2} A_{k+1} A_{k} \\ \vdots \\ C_{k+n-1} A_{k+n-2} A_{k+n-3} \cdots A_{k}\end{bmatrix} = n \label{observable}
\end{eqnarray}
\begin{eqnarray}
{\bf{rank}} \left[\begin{matrix}B_{k+n-1}\!&\!A_{k+n-1} B_{k+n-2}\!&\!A_{k+n-1} A_{k+n-2} B_{k+n-3}\end{matrix}\right.\nonumber
\end{eqnarray}
\begin{eqnarray}
\left.\begin{matrix}\cdots& A_{k+n-1} \cdots A_{k+1} B_{k}\end{matrix}\right] = n \label{controllable}
\end{eqnarray}
Let $M$ be the period of the periodically time-varying system ($M$-periodic system). $M$ is assumed to be given in this paper. Then, $A_k = A_{k+M}$, $B_k = B_{k+M}$, $C_k = C_{k+M}$ and $D_k = D_{k+M}$ hold. Let $A_0,\cdots,A_{M-1}$, $B_0,\cdots,B_{M-1}$, $C_0,\cdots,C_{M-1}$ and $D_0,\cdots,D_{M-1}$ be given, then $A_k$, $B_k$, $C_k$ and $D_k$ in (\ref{plant0}) can be re-written as follows. 
\begin{eqnarray}
&A_k = A_{k\,{\bf{mod}}\,M}, B_k = B_{k\,{\bf{mod}\,M}},  \nonumber \\
&C_k = C_{k\,{\bf{mod}}\,M}, D_k = D_{k\,{\bf{mod}},M} \label{eq3}
\end{eqnarray}
Therefore, for system identification of periodically time-varying systems, it is sufficient to find $A_0, \cdots, A_{M-1}$, $B_0, \cdots, B_{M-1}$, $C_0, \cdots, C_{M-1}$ and $D_0, \cdots, D_{M-1}$. In addition, it is equivalent to consider the conditions (\ref{observable}) with (\ref{eq3}) and the following condition. 
\begin{eqnarray}
&{\bf{rank}} \begin{bmatrix}C_{k\,{\bf{mod}}\,M}\\C_{(k+1)\,{\bf{mod}}\,M} A_{k\,{\bf{mod}}\,M}\\C_{(k+2)\,{\bf{mod}}\,M} A_{(k+1)\,{\bf{mod}}\,M} A_{k\,{\bf{mod}}\,M} \\ \vdots \\ C_{(k+n-1)\,{\bf{mod}}\,M} A_{(k+n-2)\,{\bf{mod}}\,M} \cdots A_{k\,{\bf{mod}}\,M}\end{bmatrix} \nonumber \\& = n \label{observable2}
\end{eqnarray}
In similar way, the controllability condition is given as follow. 
\begin{eqnarray}
\lefteqn{{\bf{rank}} \left[\begin{matrix}B_{(k+n-1)\,{\bf{mod}}\,M}\!&\!\!A_{(k+n-1)\,{\bf{mod}}\,M} B_{(k+n-2)\,{\bf{mod}}\,M}\end{matrix}\right.}\nonumber\\ &\left.\begin{matrix}\cdots& A_{(k+n-1)\,{\bf{mod}}\,M} \cdots A_{(k+1)\,{\bf{mod}}\,M} B_{k\,{\bf{mod}}\,M}\end{matrix}\right]\nonumber \\ & = n \label{controllable2}
\end{eqnarray}
We can find from (\ref{observable2}) that it is sufficient to consider the case with $k = 0,\cdots, M-1$ for the $M$-periodic system. 
Based on the above periodic time-varying system setup, the following system identification problem is considered.

\begin{prob}\label{prob1}
Consider the case where input data $\{u(k)\}$ is applied to the plant (\ref{plant0}) and (\ref{plant02}) to obtain output data $\{y(k)\}$. 
Based on the input-output data $\{u(k), y(k)\}_{k=0}^{N-1}$, estimate $A_k$, $B_k$, $C_k$ and $D_k (k=0, \cdots, M-1)$ up to state coordinate transformation.
\hfill $\Box$
\end{prob}
We denote the system matrices $\A_k$, $\B_k$, $\C_k$, and $\D_k$ are the solutions of Problem \ref{prob1}. 
Since we are dealing with an $M$-periodic system, from (\ref{plant0}), (\ref{plant02}) and (\ref{eq3}), we need only matrices $\A_k$, $\B_k$, $\C_k$, $\D_k$ for $k=0,\cdots,M-1$ in Problem \ref{prob1}. 

In addition to the observability conditions, we consider the following matrix $F \in {\bf{R}}^{n\times nl}$ for convenience:  
\begin{eqnarray}
F = \begin{bmatrix} F_1 & F_2 & \cdots & F_n\end{bmatrix},
\end{eqnarray}
where $F_j (j=1,\cdots,n)$ is $n\times l$ matrices. 
We can choose appropriate matrices $F$ such that the matrix rank of the following $n\times n$ matrices $X_k$: 
\begin{eqnarray}
\lefteqn{X_k := }\nonumber \\
&F \begin{bmatrix}C_{k\,{\bf{mod}}\,M}\\C_{(k+1)\,{\bf{mod}}\,M} A_{k\,{\bf{mod}}\,M}\\C_{(k+2)\,{\bf{mod}}\,M} A_{(k+1)\,{\bf{mod}}\,M} A_{k\,{\bf{mod}}\,M} \\ \vdots \\ C_{(k+n-1)\,{\bf{mod}}\,M} A_{(k+n-2)\,{\bf{mod}}\,M} \cdots A_{k\,{\bf{mod}}\,M}\end{bmatrix}\label{fk} 
\end{eqnarray}
becomes $n$ for any $k = 0,\cdots M-1$ when rank condition (\ref{observable2}) holds. 
Then, we determine a matrix $\check X$ as follow:
\begin{eqnarray}
\check X = \bf{diag}\{X_0,X_1,\cdots,X_{M-1}\}\label{checkx}
\end{eqnarray}
The matrix rank of $\check X$ is $Mn$ by appropriate choice of $F$. 

We show a simple example about choosing $F$. We consider the case that $l = 1$. The size of each $F_j$ is $n \times 1$. The following components $(F_j)_i$: 
\begin{eqnarray}\label{fji}
(F_j)_i = \begin{cases} 1, i = j\\0, i\neq j\end{cases}, 
\end{eqnarray}
is one obvious choice for $F_j$ that $F$ satisfy the rank condition of $\check X$. Then, we can see that $F$ is given as an identity matrix $I_{n}$. It is not difficult to choose matrices $F$ which satisfy the conditions of matrix rank. 
\subsection{Time invariant system expression using cyclic reformulation} \label{sec22}
The lifting reformulation is the most traditional method for obtaining a time-invariant system from an original periodically time-varying system. The lifting operation consists of packaging the values of a signal over one period in an extended signal. On the other hand, a method used in this paper is a cyclic reformulation \cite{cyc1,cyc2}. 

At first, a cycled input signal is determined based on the input for (\ref{plant0}) as follows:
\begin{eqnarray}
\lefteqn{\check u(0) = \begin{bmatrix}u(0)\\O_{m,1}\\ \vdots\\ O_{m,1}\end{bmatrix}, \check u(1) = \begin{bmatrix}O_{m,1}\\u(1)\\ \vdots\\ O_{m,1}\end{bmatrix}, \cdots, }\label{checku}\\&\check u(\!M\!-\!1\!) = \begin{bmatrix}O_{m,1}\\ \vdots\\O_{m,1}\\ u(\!M\!-\!1\!)\end{bmatrix}, \check u(M) = \begin{bmatrix}u(M)\\O_{m,1}\\ \vdots\\ O_{m,1}\end{bmatrix}, \cdots \nonumber
\end{eqnarray}
The cycled input $\check u(k)\in {\bf{R}}^{Mm}$ is obtained by using the input $u(k)$. $\check u(k)$ has a unique non-zero sub-vector $u(k)$ at each time-point. The sub-vector $u(k)$ cyclically shifts along the column blocks. In the same manner, we can determine $\check v(k)$ and $\check w(k)$ as the cycled disturbances. 

Then, the cyclic reformulation of the $M$-periodic system (\ref{plant0}), (\ref{plant02}) with (\ref{eq3}) is described by
\begin{equation}
     \label{eqcyclic}
     \begin{array}{rcl}
          \check{x}(k+1) & = & \check{A}\check{x}(k) + \check{B}(\check{u}(k)+\check w(k))\\
          \check{y}(k) & = & \check{C}\check{x}(k) + \check{D}\check{u}(k) + \check v(k), 
     \end{array}
\end{equation}
where matrices $\check{A}, \check{B}, \check{C}, \check{D}$ are given as follows: 
\begin{equation}
     \check{A} = \left[\begin{array}{ccccc}
          O_{n,n} & \cdots & \cdots& O_{n,n} & A_{M-1}\\
          A_0 &  O_{n,n} & \cdots &  O_{n,n} &  O_{n,n}\\
           O_{n,n} & A_1 & \ddots & \vdots & \vdots\\
          \vdots & \ddots & \ddots &  O_{n,n} & \vdots\\
           O_{n,n} & \cdots &  O_{n,n} & A_{M-2} &  O_{n,n} 
     \end{array} \right],\label{checka}
\end{equation}
\begin{equation}
     \check{B} = \left[\begin{array}{ccccc}
           O_{n,m} & \cdots & \cdots& O_{n,m} & B_{M-1}\\
          B_0 & O_{n,m} & \cdots & O_{n,m} & O_{n,m}\\
          O_{n,m} & B_1 & \ddots & \vdots & \vdots\\
          \vdots & \ddots & \ddots & O_{n,m} & \vdots\\
          O_{n,m} & \cdots & O_{n,m} & B_{M-2} & O_{n,m} 
     \end{array} \right],\label{checkb}
\end{equation}
\begin{equation}
     \check{C} = \left[\begin{array}{cccc}
          C_0 & O_{l,n} & \cdots &O_{l,n} \\
          O_{l,n}  & C_1 & \ddots & \vdots\\
          \vdots & \ddots & \ddots & O_{l,n}\\
          O_{l,n}  & \cdots & O_{l,n}  & C_{M-1}
     \end{array} \right],
\end{equation}
\begin{equation}
     \check{D} = \left[\begin{array}{cccc}
          D_0 & O_{l,m} & \cdots &  O_{l,m} \\
           O_{l,m}  & D_1 & \ddots & \vdots\\
          \vdots & \ddots & \ddots &  O_{l,m} \\
           O_{l,m}  & \cdots &  O_{l,m}  & D_{M-1}
     \end{array} \right].
\end{equation}
The dimensions of each matrices are given as $\check{A}\in {\bf{R}}^{Mn\times Mn}$, $\check{B}\in {\bf{R}}^{Mn\times Mm}$, $\check{C}\in {\bf{R}}^{Ml\times Mn}$ and $\check{D}\in {\bf{R}}^{Ml\times Mm}$. The structures of $\check{A}$ and $\check{B}$ are named as cyclic matrices. The structures of $\check{C}$ and $\check{D}$ are block diagonal matrices. The dimensions of the state and output are given as $\check{x}(k)\in {\bf{R}}^{Mn}$, $\check{y}(k)\in {\bf{R}}^{Ml}$. 

The initial state $\check{x}(0)$ is given by using $x(0)$ as follows: 
\begin{eqnarray}
\check{x}(0) = \left[\begin{array}{c}
     x(0) \\ O_{n,1} \\ \vdots \\ O_{n,1}\end{array}\right].\label{siki18}
\end{eqnarray}
Then, it is possible to obtain $\check x(1)$ by using (\ref{eqcyclic}), (\ref{siki18}) with $\check u(0)$ and $\check w(0)$ as follows: 
\begin{eqnarray}
     \check{x}(1) = \left[\begin{array}{c}
           O_{n,1} \\
          A_0x(0) + B_0(u(0)+w(0))\\
           O_{n,1}\\
          \vdots\\
           O_{n,1}
     \end{array} \right]. 
\end{eqnarray}
We can find that a sub-vector in $\check x(1)$ exactly corresponds to $x(1)$ by (\ref{plant0}).  
Furthermore, we can obtain the cycled state signal $\check x(k)$ and the cycled output signal $\check y(k)$ by using (\ref{eqcyclic}) and cycled input signal $\check u(k)$ by using step by step calculation. 

Then, the characteristics of Markov parameters for cyclic reformulation are considered. Markov parameters $\check H(i)$ are coefficients of the impulse response and given by using $\check A$, $\check B$, $\check C$ and $\check D$. 
\begin{eqnarray}
\check H(i) = \begin{cases}\check D,&i = 0 \\ \check C \check A^{i-1} \check B, & i=1,2,\cdots \end{cases}\label{markov}
\end{eqnarray}
By the way, given a positive integer $q$, we introduce a matrix $\check{S}_q$ defined as follows: 
\begin{equation}
     \check{S}_q = \left[\begin{array}{ccccc}
          O_{q,q} & I_q & O_{q,q} & \cdots & O_{q,q}\\
          O_{q,q} & O_{q,q} & I_q & \ddots & \vdots\\
          \vdots & \ddots & \ddots & \ddots & O_{q,q}\\
          O_{q,q} & \ddots & \ddots & \ddots & I_q\\
          I_q & O_{q,q} & \cdots & \cdots & O_{q,q} 
     \end{array} \right].   \label{sq}
\end{equation}
Note that the matrix size of $\check S_q$ is $Mq\times Mq$. $\check S_q$ is a regular matrix and its inverse matrix is a cyclic matrix. For any given block diagonal matrix $E \in R^{Mq\times Mq}$ with block sizes $q\times q$, $\check S_q^{-1}E \check S_q$ also becomes a block diagonal matrix. It should be noted that the individual block elements in $\check S_q^{-1}E \check S_q$ are shifted by one element relative to $E$. 

By using the above matrix $\check S_q$, we provide the following important lemmas for Markov parameters $\check H(i)$, which is given in (\ref{markov}), of the cycled system.
\begin{lemma}\label{lemma1}
Consider the following $Ml\times Mm$ matrix.
\begin{eqnarray}
\check S_l^i \check H(i)\label{mar1}
\end{eqnarray}
Then, $\check S_l^i \check H(i)$ is given as a block diagonal matrix with $l\times m$ block elements for any $i (= 0,1,\cdots)$. In addition, the following matrix: 
\begin{eqnarray}
\check S_l^{i-1} \check H(i) \label{mar2}
\end{eqnarray}
can be regarded as a cyclic matrix. 
\end{lemma}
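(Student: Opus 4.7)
The plan is to reduce both claims to two commutation identities and then compute $\check{H}(i)$ directly. Observing the nonzero-block pattern of $\check{A}$ and $\check{B}$, I first factor them as $\check{A}=\check{S}_n^{-1}\tilde{A}$ and $\check{B}=\check{S}_n^{-1}\tilde{B}$, where $\tilde{A}=\mathrm{diag}(A_0,\dots,A_{M-1})$ and $\tilde{B}=\mathrm{diag}(B_0,\dots,B_{M-1})$ are block diagonal. I would then establish two commutation rules by direct entry-wise computation: (i) for any block-diagonal $E$ with $n\times n$ blocks, $E\check{S}_n^{-1}=\check{S}_n^{-1}E^{+}$ with $E^{+}$ block diagonal (blocks cyclically relabeled), which is just the paper's shift-conjugation observation about $\check{S}_q$; (ii) the mixed-size identity $\check{S}_l\check{C}=\check{C}^{+}\check{S}_n$ with $\check{C}^{+}=\mathrm{diag}(C_1,\dots,C_{M-1},C_0)$, obtained by comparing the nonzero blocks of both sides position-by-position. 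Iterating (ii) gives $\check{C}\check{S}_n^{-i}=\check{S}_l^{-i}\check{C}^{(i)}$ with $\check{C}^{(i)}$ block diagonal.

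A short induction on $k$ based on (i) then yields $\check{A}^k=\check{S}_n^{-k}\tilde{A}^{(k)}$ with $\tilde{A}^{(k)}$ block diagonal; the inductive step sets $\tilde{A}^{(k+1)}=(\tilde{A}^{(k)})^{+}\tilde{A}$, which is again block diagonal as a product of two block-diagonal factors. Combining with $\check{B}=\check{S}_n^{-1}\tilde{B}$ and one further use of (i), I obtain $\check{A}^{i-1}\check{B}=\check{S}_n^{-i}\tilde{F}_i$ for a block-diagonal $\tilde{F}_i$ with $n\times m$ blocks. Substituting into the Markov parameter and invoking the iterated form of (ii),
\[
\check{S}_l^{\,i}\,\check{H}(i) \;=\; \check{S}_l^{\,i}\,\check{C}\,\check{S}_n^{-i}\,\tilde{F}_i \;=\; \check{C}^{(i)}\,\tilde{F}_i,
\]
which is the product of two block-diagonal matrices and therefore block diagonal with $l\times m$ blocks, proving the first claim (the case $i=0$ is immediate since $\check{H}(0)=\check{D}$ is already block diagonal).

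The second claim then follows for free: $\check{S}_l^{\,i-1}\check{H}(i)=\check{S}_l^{-1}\bigl(\check{S}_l^{\,i}\check{H}(i)\bigr)$, and left-multiplying any $l\times m$ block-diagonal matrix by $\check{S}_l^{-1}$ produces exactly the cyclic block pattern exhibited by $\check{A}$ and $\check{B}$, as one sees by inspecting the nonzero positions of $\check{S}_l^{-1}$. The main obstacle is the bookkeeping of two shifts of different block sizes, $\check{S}_l$ on the output side and $\check{S}_n$ on the state side, which only cohere through the mixed-size identity (ii); once (ii) is in place the rest is formal manipulation, and both halves of the lemma drop out symmetrically from the single factorization $\check{H}(i)=\check{S}_l^{-i}\check{C}^{(i)}\tilde{F}_i$.
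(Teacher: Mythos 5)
Your proof is correct and follows essentially the same route as the paper's: both reduce the claim to the observation that multiplication by the shift matrices converts cyclic matrices into block-diagonal ones and back, and both obtain the second claim by left-multiplying the first by $\check S_l^{-1}$. Your version is simply a more explicit rendering --- the factorizations $\check A=\check S_n^{-1}\tilde A$, $\check B=\check S_n^{-1}\tilde B$ and the mixed-size intertwining $\check S_l\check C=\check C^{+}\check S_n$ make rigorous the steps the paper dismisses as ``a straightforward operation.''
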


\begin{proof}
Since $\check S_l^0 = I$ holds, $\check S_l^0 \check H(0) = \check D$ is obviously satisfied and is given as a block diagonal matrix. 
From the fact that the product of $\check S_l$ and the cyclic matrix is a block diagonal matrix, the following matrix is given as a block diagonal matrix by a straightforward operation. 
\begin{eqnarray}
\check S_l^{i-1} \check C \check A^{i-1}\label{lemma1proof1}
\end{eqnarray}
Multiplying (\ref{lemma1proof1}) by $\check S_l$ from the left and the cyclic matrix $\check B$ from the right yields (\ref{mar1}) for any $i=1,2,\cdots$. Therefore, (\ref{mar1}) is given as the block diagonal matrix for any $i$. 

Then, it is obvious the term (\ref{mar2}) is a cyclic matrix because (\ref{mar2}) is the product of $\check S_l^{-1}$ and the block diagonal matrix (\ref{mar1}).  
\end{proof}
In the same manner as the Lemma \ref{lemma1}, we can obtain the following lemma. 
\begin{lemma}\label{lemma2}
Consider the following $Ml\times Mm$ matrix.
\begin{eqnarray}
 \check H(i)\check S_m^i
\end{eqnarray}
Then, $ \check H(i)\check S_m^i$ can be regarded as a block diagonal matrix with $l\times m$ block elements for any $i (= 0,1,\cdots)$. In addition, the following matrix: 
\begin{eqnarray}
 \check H(i)\check S_m^{i-1}
\end{eqnarray}
can be regarded as a cyclic matrix. 
\end{lemma}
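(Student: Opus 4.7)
The plan is to mirror the proof of Lemma 1 with left and right multiplications interchanged. The symmetric key fact one needs is: right-multiplication of a cyclic matrix by $\check S_m$ produces a block diagonal matrix (with $l\times m$ blocks), just as left-multiplication by $\check S_l$ did in Lemma 1. This symmetry follows directly from the block-permutation structure of $\check S_m$ in (\ref{sq}): $\check S_m$ is itself a cyclic (block-)shift acting on the columns, and it precisely undoes the cyclic column pattern present in $\check B$ (and in powers of $\check A$). The dual statement, that right-multiplication of a block diagonal matrix by the cyclic $\check S_m^{-1}$ yields a cyclic matrix, follows by the same bookkeeping.

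For $i=0$, the identity $\check S_m^0 = I$ gives $\check H(0)\check S_m^0 = \check D$, which is block diagonal by construction.

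For $i\geq 1$, I would write $\check H(i)\check S_m^i = \check C\,\check A^{i-1}\check B\,\check S_m^i$. Since $\check A$ and $\check B$ are cyclic, $\check A^{i-1}\check B$ is cyclic (as products of cyclic matrices, with appropriate relative shifts, remain cyclic). Then each successive right-multiplication by $\check S_m$ alternates the expression between cyclic and block diagonal forms: after the first factor, $\check A^{i-1}\check B\,\check S_m$ is block diagonal; multiplying once more by $\check S_m$ makes the result cyclic again (block diagonal times cyclic is cyclic); and so on. After exactly $i$ factors of $\check S_m$, the matrix $\check A^{i-1}\check B\,\check S_m^i$ ends up block diagonal, and left-multiplication by the block diagonal $\check C$ preserves block diagonality with $l\times m$ blocks. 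This establishes the first claim. For the second claim, I would use
\begin{eqnarray}
\check H(i)\,\check S_m^{i-1} = \bigl(\check H(i)\,\check S_m^{i}\bigr)\check S_m^{-1}.
\end{eqnarray}
The first factor is block diagonal by the first part, and $\check S_m^{-1}$ is cyclic as noted after (\ref{sq}); a block diagonal matrix times a cyclic matrix is cyclic, so $\check H(i)\check S_m^{i-1}$ is cyclic.

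The main obstacle is justifying carefully the claim that right-multiplication of a cyclic matrix by $\check S_m$ yields a block diagonal matrix. It is the exact mirror image of the fact used without full detail in the proof of Lemma 1, but the index tracking of which block lands on the main diagonal must be verified by inspection of (\ref{sq}), (\ref{checka}), (\ref{checkb}). Once this symmetric identity is in hand, the inductive structure above is identical to that of Lemma 1 and the remainder of the argument is routine.
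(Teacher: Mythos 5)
Your overall plan---mirror Lemma \ref{lemma1} with the roles of left and right multiplication interchanged---is exactly what the paper intends; its own proof of Lemma \ref{lemma2} consists of the single sentence that it follows ``using a similar procedure as Lemma \ref{lemma1},'' so you are supplying detail the paper omits. The $i=0$ case, the key symmetric fact (a cyclic matrix right-multiplied by $\check S_m$ becomes block diagonal), and the derivation of the second claim via $\check H(i)\check S_m^{i-1}=(\check H(i)\check S_m^{i})\check S_m^{-1}$ with $\check S_m^{-1}$ cyclic are all correct.

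However, the inductive bookkeeping in the middle does not hold as stated. You assert that ``after the first factor, $\check A^{i-1}\check B\,\check S_m$ is block diagonal,'' and that further factors of $\check S_m$ make the expression ``alternate between cyclic and block diagonal forms.'' Neither is true for $i\ge 2$. The product $\check A^{i-1}\check B$ is not a cyclic matrix in the sense of (\ref{checka}): its nonzero blocks sit at positions $(q+i,\,q)$ modulo $M$, i.e.\ on the $i$-th cyclic subdiagonal, not the first. Each right factor of $\check S_m$ (whose identity blocks sit at $(q,\,q+1)$ by (\ref{sq})) raises these blocks by exactly one subdiagonal, so $\check A^{i-1}\check B\,\check S_m^{k}$ lives on the $(i-k)$-th subdiagonal; it is block diagonal only when $k=i$, cyclic only when $k=i-1$, and neither for other $k$. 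There is no alternation. Relatedly, you call $\check S_m$ itself cyclic in the phrase ``block diagonal times cyclic is cyclic''; by the paper's convention stated after (\ref{sq}) it is $\check S_m^{-1}$ that is cyclic, while $\check S_m$ carries its blocks on the superdiagonal. The repair is the clean induction that genuinely mirrors Lemma \ref{lemma1}: $\check B\check S_m$ is block diagonal (cyclic times $\check S_m$), and if $\check A^{i-1}\check B\check S_m^{i}$ is block diagonal then $\check A^{i}\check B\check S_m^{i+1}=\check A\,(\check A^{i-1}\check B\check S_m^{i})\,\check S_m$ is cyclic after the left factor $\check A$ and block diagonal after the right factor $\check S_m$. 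Left-multiplying by the block diagonal $\check C$ then yields the first claim, and your $\check S_m^{-1}$ step yields the second. With this correction your proof is complete and coincides with the paper's intended argument.
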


\begin{proof}
This lemma can be proved using a similar procedure as Lemma \ref{lemma1}. 
\end{proof}
In addition to Lemmas \ref{lemma1} and \ref{lemma2}, the following lemma can be obtained. 
\begin{lemma}\label{lemma3}
Consider the following $Ml\times Mm$ matrix.
\begin{eqnarray}
 \check S_l^i \check H(i+j)\check S_m^{j}
\end{eqnarray}
Then, $\check S_l^i \check H(i+j)\check S_m^j$ can be regarded as a block diagonal matrix with $l\times m$ block elements for any $i, j (= 0,1,\cdots)$. In addition, the following matrix: 
\begin{eqnarray}
 \check S_l^i \check H(i+j)\check S_m^{j-1}
\end{eqnarray}
can be regarded as a cyclic matrix. 
\end{lemma}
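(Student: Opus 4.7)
The plan is to reduce Lemma 3 to Lemma 1 together with the shift identity for $\check S_q$ stated just after (\ref{sq}). Applying Lemma 1 at index $i+j$, the matrix
\begin{eqnarray*}
E := \check S_l^{i+j}\check H(i+j)
\end{eqnarray*}
is block diagonal with $l\times m$ block elements. Since $\check S_l$ is invertible, $\check H(i+j)=\check S_l^{-(i+j)}E$, and therefore
\begin{eqnarray*}
\check S_l^i \check H(i+j)\check S_m^j = \check S_l^{-j} E \check S_m^j.
\end{eqnarray*}
All that remains for the first claim is to check that this last rewriting preserves block-diagonality.

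The text records that $\check S_q^{-1} E \check S_q$ is block diagonal whenever $E$ is block diagonal with square $q\times q$ blocks, the effect being merely to shift the diagonal blocks by one position cyclically. The same index calculation extends verbatim to the rectangular case: if $E$ is block diagonal with $l\times m$ blocks $E_0,\ldots,E_{M-1}$, then $\check S_l^{-1} E \check S_m$ is again block diagonal with the diagonal blocks cyclically relabelled. Iterating this $j$ times shows that $\check S_l^{-j} E \check S_m^j$ is block diagonal with $l\times m$ blocks, which establishes the first assertion.

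For the cyclic-matrix claim, I would write
\begin{eqnarray*}
\check S_l^i \check H(i+j) \check S_m^{j-1} = \bigl(\check S_l^i \check H(i+j) \check S_m^j\bigr) \check S_m^{-1}.
\end{eqnarray*}
The first factor is block diagonal by the step above, and $\check S_m^{-1}$ is a cyclic matrix as noted after (\ref{sq}). Since the product of a block-diagonal matrix with a cyclic matrix is cyclic (each block row of the product contains exactly one nonzero block, at the uniformly shifted column dictated by $\check S_m^{-1}$), the second claim follows. The only mildly delicate point in the argument is justifying the rectangular-block version of the shift identity used in the middle paragraph, but this is a direct index computation mirroring the square-block case spelled out in the text, so I do not expect it to be a genuine obstacle.
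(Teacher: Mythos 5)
Your proof is correct, and it is worth noting that the paper itself supplies no proof of Lemma \ref{lemma3}: it only remarks afterwards that Lemmas \ref{lemma1} and \ref{lemma2} are the special cases $j=0$ and $i=0$, the implied argument being a direct structural computation on $\check S_l^{\,i+j-1}\check C\check A^{i+j-1}\check B$ in the style of the proof of Lemma \ref{lemma1}. Your route is different and more economical: you invoke Lemma \ref{lemma1} once at index $i+j$ to obtain the block-diagonal matrix $E=\check S_l^{\,i+j}\check H(i+j)$ and then transport it by the two-sided shift $E\mapsto \check S_l^{-j}E\check S_m^{j}$. The only ingredient you add beyond the paper's toolbox is the rectangular version of the shift identity stated after (\ref{sq}); this does hold --- indexing the $l\times m$ blocks mod $M$, one has $(\check S_l^{-1}E\check S_m)_{a,b}=E_{a-1,b-1}$, so block-diagonality is preserved and the diagonal blocks are merely relabelled cyclically --- so the point you flag as delicate is not a genuine obstacle. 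The final step (a block-diagonal matrix times the cyclic matrix $\check S_m^{-1}$ is cyclic) is likewise a one-line index check. What your approach buys is that the general two-parameter statement follows formally from the one-parameter Lemma \ref{lemma1}; this reverses the paper's stated dependence (it presents Lemma \ref{lemma3} as subsuming Lemmas \ref{lemma1} and \ref{lemma2}), but both directions are consistent because Lemma \ref{lemma1} is proved independently.
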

Lemmas \ref{lemma1}, \ref{lemma2}, and \ref{lemma3} given here are valuable properties that hold for the cyclic reformulations. The characteristics shown in Lemmas \ref{lemma1} and \ref{lemma2} are essential ideas for identifying the linear periodically time-varying system and are used later. 

Note that Lemmas \ref{lemma1} and \ref{lemma2} are automatically satisfied if we select appropriate $i$ or $j$ in Lemma \ref{lemma3}. Therefore it is sufficient to handle Lemma \ref{lemma3} as a property of the system with cyclic reformulation. 
In addition to the above valuable lemmas, we give a good property of the systems with cyclic reformulation. 
Matrices $\check F_j$ with the size ($Mn\times Ml$) is determined as follows:
\begin{equation}
     \check{F}_j = \left[\begin{array}{cccc}
          F_j & O_{n,l} & \cdots & O_{n,l}\\
          O_{n,l} & \ddots & \ddots & \vdots\\
          \vdots & \ddots  & F_j & O_{n,l}\\
          O_{n,l} & \cdots & O_{n,l} & F_j
     \end{array} \right]\label{fj}
\end{equation}
By using matrices $\check S_l$ in (\ref{sq}) and $\check F_j$ in (\ref{fj}), the matrix $\check X$, which is determined in (\ref{checkx}), can be derived by hte following calculation with cycled system parameters: 
\begin{eqnarray}
\check X = \sum_{j = 1}^n \check F_j \check S_l^{j-1} \check C \check A^{j-1}. \label{checkx2}
\end{eqnarray}
The matrix rank of $\check X$ is $Mn$ if each $X_i$ in (\ref{fk}) is given as regular matrix. The matrix rank of $\check X$ depends on the observability of the system and selection of $F$. 

Then, we provide an example of $\check {X}$ using a concrete case. A $2$nd-order SISO discrete-time linear periodically time-varying system $P$ is given and its period is $M = 3$. The given system is assumed as observable and controllable. The cyclic reformulation of the plant is written as follows. 
\begin{equation}
     \check{A} = \left[\begin{array}{ccc}
          O_{2,2} & O_{2,2} & A_{2}\\
          A_0 &  O_{2,2} & O_{2,2}\\
          O_{2,2} & A_{1} &  O_{2,2} 
     \end{array} \right], \check{B} = \left[\begin{array}{ccc}
          O_{2,1} & O_{2,1} & B_{2}\\
          B_0 &  O_{2,1} & O_{2,1}\\
          O_{2,1} & B_{1} &  O_{2,1} 
     \end{array} \right], \nonumber
\end{equation}
\begin{equation}
     \check{C} = \left[\begin{array}{ccc}
          C_0 & O_{1,2} &O_{1,2} \\
          O_{1,2}  & C_1 & O_{1,2}\\
          O_{1,2} & O_{1,2}  & C_{2}
     \end{array} \right], \check{D} = \left[\begin{array}{ccc}
          D_0 & 0 & 0 \\
          0 & D_1 & 0\\
          0 & 0 & D_2
     \end{array} \right]. \nonumber
\end{equation}
Since $l = 1$ hold for SISO system, $\check S_1$ is given as follow. 
\begin{equation}
     \check S_1 = \left[\begin{array}{ccccc}
          0 & 1 & 0\\
          0 & 0 & 1\\
          1 & 0 & 0 
     \end{array} \right]\label{sqrei}
\end{equation}
By using (\ref{fji}), $F_1$ and $F_2$ is selected as follows. 
\begin{equation}
     \check F_1 = \left[\begin{array}{ccc}
          1 & 0 & 0\\
          0 & 0 & 0\\
          0 & 1 & 0\\
           0 & 0 & 0\\
            0 & 0 & 1\\
             0 & 0 & 0
     \end{array} \right],
     \check F_2 = \left[\begin{array}{ccc}
          0 & 0 & 0\\
          1 & 0 & 0\\
          0 & 0 & 0\\
           0 & 1 & 0\\
            0 & 0 & 0\\
             0 & 0 & 1
     \end{array} \right]\label{checkfexam}
\end{equation}
By applying (\ref{sqrei}), (\ref{checkfexam}) to (\ref{checkx2}), $\check X$ is derived as follow. 
\begin{equation}
     \check{X} = \left[\begin{array}{ccc}
          C_0 & O_{1,2} &O_{1,2} \\
          C_1 A_0 & O_{1,2} &O_{1,2} \\
          O_{1,2}  & C_1 & O_{1,2}\\
          O_{1,2}  & C_2 A_1 & O_{1,2}\\
          O_{1,2} & O_{1,2}  & C_{2}\\
          O_{1,2} & O_{1,2}  & C_{0}A_2
     \end{array} \right]\label{checkxexam}
\end{equation}
It is obvious that (\ref{checkxexam}) collesponds to (\ref{checkx}). 
We can find that the matrix size of $\check X$ in (\ref{checkxexam}) is $6\times 6$ and is regular (full rank) matrix because the pair $(C_k,A_k)$ is observable for any $k$. 
\section{Subspace Identification using Cycled Signals}\label{chapt3}

In this study, we use the subspace identification method since we use the state-space model as the model for the identified parameters. The subspace identification method is a major system identification method based on the state-space realization of dynamical systems. The subspace identification method is constructed from the theory of the realization problem, which identifies the matrices $A$, $B$, $C$, $D$, and the dimension $n$ in the linear time-invariant state-space model from the given input-output data. The advantage of using the subspace identification method is that it can be easily applied to MIMO systems compared to other system identification methods, and it uses numerically stable algorithms such as singular value decomposition and QR decomposition, so the calculation accuracy is high.

Furthermore, the subspace identification method is classified into several categories depending on the weighting matrix used for singular value decomposition. 
We apply the system identification method as follows: At first, we apply input for (\ref{checku}) and obtain an output signal. Then, the cyclic reformulation is applied to the input and output data sets. Moreover, the subspace identification method is applied for the cycled signals and obtains a state space model parameters ($\A_*,\B_*,\C_*,\D_*$). The matrix sizes are $\A_* \in R^{Mn\times Mn}$, $\B_* \in R^{Mn\times Mm}$, $\C_* \in R^{Ml\times Mn}$, $\D_* \in R^{Ml\times Mm}$. 

System identification is performed based on the subspace identification method using the cycled signals $\check u(k)$ and $\check y(k)$. Then, a system obtained by the system identification method is denoted as $\A_*,\B_*,\C_*,\D_*$. The Markov parameters for the obtained system are given as follows: 
\begin{eqnarray}
\check \Hm (i) = \begin{cases}\D_*,&i = 0 \\ \C_*  \A_*^{i-1} \B_*, & i=1,2,\cdots \end{cases}. 
\end{eqnarray}
The following $Ml\times Mm$ matrix is considered in the same manner in Lemma \ref{lemma3}. 
\begin{eqnarray}
\check S_l^{i} \check \Hm (i+j) \check S_m^j, \label{mar12}
\end{eqnarray}

Then, we set up the following assumption related to Lemma \ref{lemma3} for the state space model parameters ($\A_*,\B_*,\C_*,\D_*$). 

\begin{ass}\label{ass1}
The matrix $ \check S_l^{i} \check \Hm (i+j) \check S_m^j$ can be regarded as a block diagonal matrix with $l\times m$ block elements for any $i, j (= 0,1,\cdots)$. $\hfill  \Box$
\end{ass}

In the later section, we verify that Assumption \ref{ass1} is reasonable through a numerical simulation. 

\section{The state coordinate transformation for obtaining cyclic reformulation}

The matrix parameters are obtained as $\A_*, \B_*, \C_*, \D_*$ by using system identification with the cycled signals $\check u(k)$ and $\check y(k)$. Unfortunately, it is expected that $\A_*, \B_*, \C_*, \D_*$ are dense matrices and are not obtained as a cyclic reformulation structure. The state coordinate transformation for the system parameters ($\A_*$, $\B_*$, $\C_*$, $\D_*$) using the specified transformation matrix $T \in {\bf{R}}^{Mn\times Mn}$ is considered for obtaining cyclic reformulation in this section. The transformation matrix will be derived based on Assumption \ref{ass1}.

The state space vector $\check x_* \in {\bf{R}}^{Mn\times 1}$ of a state space model $\A_*, \B_*, \C_*, \D_*$ is determined. 
The transformation matrix $T$ is set to give the matrices as follows. 
\begin{eqnarray}
\check \A = T^{-1}\A_* T, \check \B = T^{-1}\B_*, \check \C = \C_* T, \check \D = \D_* \label{henkan}
\end{eqnarray}
Note that the matrix $T$ must be invertible. New state $\check x_{tf} \in {\bf{R}}^{Mn\times 1}$ is given by $\check x_{tf} = T^{-1} \check x_*$. 
The following state-space model is obtained using the transformation matrix $T$. 
\begin{equation}
     \label{cycsolution}
     \begin{array}{rcl}
          \check{x}_{tf}(k+1) & = & \check{\A}\check{x}_{tf}(k) + \check{\B}\check{u}(k)\\
          \check{y}(k) & = & \check{\C}\check{x}_{tf}(k) + \check{\D}\check{u}(k) 
     \end{array}
\end{equation}

By selecting an appropriate $T$ for the obtained $\A_*, \B_*, \C_*, \D_*$ in system identification, the objective of this study, which is presented in Problem \ref{prob1}, will be achieved if $\check \A, \check \B, \check \C, \check \D$ are given as a cyclic reformulation form. 

Using the matrices obtained as described above, $T^{-1}$ is defined as follows, where $T^{-1}$ is the inverse matrix of the transformation matrix $T$. 
\begin{eqnarray}
T^{-1} = \sum_{j = 1}^n \check F_j \check S_l^{j-1} \C_* \A_{*}^{j-1}\label{coordinate}
\end{eqnarray}
The matrix form of $\check F_j$ is given in (\ref{fj}). 
The following theorem holds for the state coordinate transformation matrix $T$ given as the inverse of (\ref{coordinate}) for the case that $T$ is a regular matrix. Also, we can see that (\ref{coordinate}) and (\ref{checkx2}) are closely related. We should appropriately choose $\check F_j (j= 1,\cdots, n)$ based on the rank condition. 

The following theorem is presented to obtain the cyclic reformulation of the derived model. 
\begin{theorem}
Assuming that the parameters $\A_*, \B_*, \C_*, \D_*$ are obtained via the subspace identification based on the cycled signal and Assumption \ref{ass1} is satisfied for $\A_*, \B_*, \C_*, \D_*$. In addition, the pairs $(\A_*, \B_*)$ and $(\C_*, \A_*)$ are controllable and observable, respectively. Then, the system $\check \A, \check \B, \check \C, \check \D$, which is obtained by the state coordinate transformation (\ref{henkan}) of  $\A_*, \B_*, \C_*, \D_*$ using the transformation matrix $T$ of (\ref{coordinate}), has a structure of the cyclic reformulation. 
\end{theorem}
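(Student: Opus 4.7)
The goal is to verify in turn that the four transformed matrices have the shape required by the cyclic reformulation: $\check{\D}$ and $\check{\C}$ are block diagonal, while $\check{\B}$ and $\check{\A}$ are cyclic in the sense of \eqref{checkb} and \eqref{checka}. A basic observation used throughout is that the similarity \eqref{henkan} preserves Markov parameters, so $\check{\C}\check{\A}^{k-1}\check{\B}=\check{\Hm}(k)$ for $k\geq 1$ and $\check{\D}=\check{\Hm}(0)$; Assumption~\ref{ass1} therefore applies verbatim to the transformed Markov parameters. Setting $i=j=0$ there immediately gives $\check{\D}$ block diagonal.

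The central intermediate claim is that, for every integer $k\geq 1$,
\begin{equation*}
\mathrm{BD}_k\;:=\;\check{\A}^{k-1}\check{\B}\check{S}_m^{\,k}
\end{equation*}
is block diagonal with $n\times m$ blocks. Using $\check{\A}^{k-1}\check{\B}=T^{-1}\A_*^{k-1}\B_*$ together with the formula \eqref{coordinate} for $T^{-1}$,
\begin{equation*}
\mathrm{BD}_k\;=\;\sum_{j=1}^{n}\check{F}_j\bigl(\check{S}_l^{\,j-1}\check{\Hm}(j+k-1)\check{S}_m^{\,k}\bigr).
\end{equation*}
Each parenthesised factor has the form $\check{S}_l^{i'}\check{\Hm}(i'+j')\check{S}_m^{\,j'}$ with $i'=j-1$ and $j'=k$, hence is block diagonal with $l\times m$ blocks by Assumption~\ref{ass1}; since $\check{F}_j$ is block diagonal with $n\times l$ blocks by \eqref{fj}, every summand, and consequently $\mathrm{BD}_k$, is block diagonal. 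Specialising to $k=1$ gives $\check{\B}\check{S}_m$ block diagonal, which in view of \eqref{checkb} and \eqref{sq} is exactly the statement that $\check{\B}$ has the cyclic reformulation shape.

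To lift this Markov-level information to the entry-level structure of $\check{\A}$ and $\check{\C}$, assemble $R:=[\mathrm{BD}_1,\ldots,\mathrm{BD}_{Mn}]$ and factorise $R=[\check{\B},\check{\A}\check{\B},\ldots,\check{\A}^{Mn-1}\check{\B}]\,\mathrm{diag}(\check{S}_m,\check{S}_m^{2},\ldots,\check{S}_m^{Mn})$. Since controllability of $(\A_*,\B_*)$ is preserved under similarity, the controllability matrix on the left has rank $Mn$, and therefore so does $R$. The nonzero columns of the $M$ block-row groups of $R$ occupy disjoint column index sets (as each $\mathrm{BD}_k$ is block diagonal), so $Mn=\sum_i\mathrm{rank}\bigl[B^{(1)}_i,\ldots,B^{(Mn)}_i\bigr]$, where $B^{(k)}_i$ denotes the $i$-th diagonal block of $\mathrm{BD}_k$; this forces every $n\times Mmn$ matrix $\bigl[B^{(1)}_i,\ldots,B^{(Mn)}_i\bigr]$ to have full row rank $n$. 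The identity $\check{\A}\,\mathrm{BD}_k=\mathrm{BD}_{k+1}\check{S}_m^{-1}$ read blockwise yields $\check{\A}_{ij}B^{(k)}_j=0$ whenever $i\not\equiv j+1\pmod M$ for every $k$, and the full-row-rank property of $[B^{(1)}_j,\ldots,B^{(Mn)}_j]$ forces each such block $\check{\A}_{ij}$ to vanish, leaving $\check{\A}$ in the cyclic shape \eqref{checka}. The identical template applied to $\check{\C}\,\mathrm{BD}_k=\check{\Hm}(k)\check{S}_m^{\,k}$ (block diagonal by Assumption~\ref{ass1} with $i=0$) yields $\check{\C}_{ij}B^{(k)}_j=0$ for all $i\neq j$, whence $\check{\C}$ is block diagonal. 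The main technical hurdle is precisely this last step of promoting Markov-level structure to the block-entry structure of $\check{\A}$ and $\check{\C}$, and the controllability-based rank argument on $R$ is what closes that gap.
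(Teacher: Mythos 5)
Your proof is correct, and it shares the paper's key intermediate object — the block-diagonal matrices $T^{-1}\A_*^{k-1}\B_*\check S_m^{k}$ obtained by expanding $T^{-1}$ via (\ref{coordinate}) and invoking Assumption \ref{ass1} — as well as the same treatment of $\check\D$ and $\check\B$. Where you genuinely diverge is in promoting this Markov-level structure to the block structure of $\check\A$ and $\check\C$. The paper selects columns of finitely many of these block-diagonal matrices through auxiliary selector matrices $\check G_j$ to build a single invertible block-diagonal matrix $Y$ as in (\ref{proofeq5}), and then reads off $\C_*T=(\C_*TY)Y^{-1}$ and $T^{-1}\A_*T=ZY^{-1}$ as products of a block-diagonal (resp.\ cyclic) matrix with the block-diagonal $Y^{-1}$. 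You instead stack all $Mn$ matrices $\mathrm{BD}_k$ into $R$, use invertibility of the $\check S_m^k$ factors and controllability of $(\A_*,\B_*)$ to get $\mathrm{rank}\,R=Mn$, exploit the disjoint column supports of the block rows to force each $[B_j^{(1)},\ldots,B_j^{(Mn)}]$ to have full row rank $n$, and then annihilate the off-pattern blocks of $\check\A$ and $\check\C$ directly from $\check\A\,\mathrm{BD}_k=\mathrm{BD}_{k+1}\check S_m^{-1}$ and $\check\C\,\mathrm{BD}_k=\check\Hm(k)\check S_m^{k}$. The two routes rest on the same controllability hypothesis, but yours buys something concrete: the paper merely asserts that $Y$ ``is considered a regular matrix through the appropriate choice of $\check G_j$,'' whereas your rank-counting argument actually proves the full-row-rank property that such a choice requires, so your version closes a step the paper leaves informal, at the cost of working with the full $Mn$-step controllability matrix rather than a finite selection.
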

\begin{proof}
From Assumption \ref{ass1}, $\D_*$ is a block diagonal matrix. We aim to prove that $T^{-1} \A_* T$ and $T^{-1} \B_*$ are cyclic matrices, and that $\C_* T$ is a block diagonal matrix.

If Assumption \ref{ass1} holds, it can be shown that the following matrix exhibits a cyclic structure. 
\begin{eqnarray}
T^{-1}\B_* = \sum_{j = 1}^n \check F_j \check{S}_l^{j-1} \C_* \A_*^{j-1}\B_*\label{proofeq1}
\end{eqnarray}
This is because $\check S_l^{j-1}\C_* \A_*^{j-1} \B_*$ is a cyclic matrix, and $\check F_j$ is a block diagonal matrix for any $j$. Thus, $T^{-1}\B_*$ is a cyclic matrix. 

Then, we prove that $\C_* T$ is given as a block diagonal matrix. 
Following a similar calculation to (\ref{proofeq1}), matrices $T^{-1}\A_*^k\B_* \check S_m^k$ ($k=0,\cdots, n-1$) are regarded as cyclic matrices. In addition, $T^{-1}\A_*^k\B_* \check S_m^{k+1}$ ($k=0,\cdots, n-1$) are given as block diagonal matrices whose matrices sizes are $Mn\times Mm$. We also confirm that $\C_*\A_*^k\B_* S^{k+1}$ ($k=0,\cdots, n-1$) are given as block diagonal matrices based on Assumption \ref{ass1}. 

Then, we determine block diagonal matrices $\check G_j (j=0,\cdots,n-1)$ whose matrix sizes are $Mm \times Mn$. Their block components are given as $G_j$, whose size are $m\times n$, as similar manner as $F_j$. As a simple example of $G_j$ for $m=1$, each $G_j$ has a size of $1 \times n$. The components of $G_j$ are given as follows: 
\begin{eqnarray}
(G_j)_i = \begin{cases} 1, i = j\\0, i\neq j\end{cases}, 
\end{eqnarray}

As the matrices $T^{-1}\A_*^j\B_* \check S_m^{j+1}\check G_j$ are block diagonal matrices for any $j$, the following matrix $Y$ is regarded as a block diagonal matrix whose size is $Mn\times Mn$. 
\begin{eqnarray}\label{proofeq5}
Y = \sum_{j=0}^{n-1}T^{-1}\A_*^j\B_* \check S_m^{j+1}\check G_j
\end{eqnarray}
In addition, $\C_* T Y$ is also a block diagonal matrix due to Assumption \ref{ass1}. In this paper, we assume the periodically time-varying system to be controllable, and $Y$ is considered a regular matrix through the appropriate choice of $\check G_j$. Since $Y$ is a block diagonal matrix, $Y$ is invertible, and $Y^{-1}$ is also a block diagonal matrix. By multiplying the block diagonal matrix $Y^{-1}$ from the right-hand side of the block diagonal matrix $\C_* T Y$, it is clear that $\C_* T$ can be represented as a block diagonal matrix.

Finally, we prove that $T^{-1}\A_* T$ is given as a cyclic matrix. 
The following matrix $Z_{ij}$ is given as a cyclic matrix for any $i,j$ from Assumption \ref{ass1}.
\begin{eqnarray}
Z_{ij} = \check S_l^{i}\C_*\A_*^{i+j} \B_* \check S^{j}
\end{eqnarray}
In addition, $Z_{ij}$ can be rewritten as follows. 
\begin{eqnarray}\label{proofeq2}
Z_{ij} = \check S_l^{i-1}\C_*\A_*^{i-1}T\, T^{-1}\A_* T\, T^{-1} \A_*^{j} \B_* \check S_m^{j+1}
\end{eqnarray}
By using the block diagonal matrices $\check F_i$ and $\check G_j$, it is obvious that the following terms are given as cyclic matrices.
\begin{eqnarray}\label{proofeq3}
\check F_i Z_{ij} \check G_j
\end{eqnarray}
Then, the following matrix $Z$ is obviously given as a cyclic matrix by using the characteristics about (\ref{proofeq2}) and (\ref{proofeq3}).
\begin{eqnarray}\label{proofeq4}
Z &= \left(\sum_{i=1}^{n} \check F_i\check S_l^{i-1}\C_*\A_*^{i-1}T\right) T^{-1}\A_* T\nonumber \\&\cdot\left(\sum_{j=0}^{n-1} T^{-1} \A_*^{j} \B_* \check S_m^{j+1}\check G_j\right)
\end{eqnarray}
In (\ref{proofeq3}), the left hand side term $\sum_{i=1}^{n} \check F_i\check S_l^{i-1}\C_*\A_*^{i-1}T$ is an identity matrix by (\ref{coordinate}). The right hand side term is $Y$, which is determined in (\ref{proofeq5}). Therefore, the following equation holds.
\begin{eqnarray}\label{proofeq6}
Z = T^{-1}\A_* T Y 
\end{eqnarray}
By multiplying $Y^{-1}$ from right hand side of (\ref{proofeq6}), $T^{-1} \A_* T$ is given by $T^{-1}A_* T = ZY^{-1}$. Since $Z$ is a cyclic matrix and $Y^{-1}$ is a block diagonal matrix, $T^{-1}A_* T$ is a cyclic matrix.  
\end{proof} 

While there is freedom for the coordinate transformations of each parameter matrix that is given as a cyclic reformulation, in addition to the obtained $T^{-1}$, this degree of freedom for the coordinate transformation can be achieved by using $\Phi^{-1}T^{-1}$ instead of $T^{-1}$, where $\Phi^{-1}$ is a block diagonal structure matrix given as follows: 
\begin{eqnarray}
\Phi = \left[\begin{array}{cccc}
          \Phi_1 & O_{n,n} & \cdots & O_{n,n}\\
          O_{n,n} & \ddots & \ddots & \vdots\\
          \vdots & \ddots  & \Phi_{M-1} & O_{n,n}\\
          O_{n,n} & \cdots & O_{n,n} & \Phi_M
     \end{array} \right].
\end{eqnarray}
Note that $\Phi_i (i=1,\cdots,M)$ should be regular matrices. 

Consequently, the cyclic identification algorithm of this paper is summarized as following Algorithm \ref{algo11}. 

\begin{algorithm}
\caption{System Identification for LPTV systems}
\label{algo11}
\begin{algorithmic}
\STATE [1.] Prepare cycled signals from the given input-output data. 
\STATE [2.] Compute $\A_*$, $\B_*$, $\C_*$,$\D_*$ using the existing subspace identification method with the cycled signals. 
\STATE [3.] Cyclic reformulation is derived using the obtained $\A_*$, $\B_*$, $\C_*$,$\D_*$ with the specific state coordinate transformation matrix $T$ from (\ref{coordinate}). 
\STATE [4.] Parameters in the time-varying state-space model $\A_k, \B_k, \C_k, \D_k$ are selected from the components of the cyclic reformulation $\check \A, \check \B, \check \C, \check \D$. 
\end{algorithmic}
\end{algorithm}

\section{Simulation}
In this section, numerical simulations of the proposed system identification algorithm are verified.

In the beginning, we verify whether Assumption \ref{ass1} is satisfied or not for the case that a subspace identification is applied for the cycled input and output. $n = 2$, $M = 3$, $m = l = 1$ is selected and the following linear time-varying system $P_{ex}$ is considered as a plant. 
\begin{eqnarray}
&A_0 = \begin{bmatrix}0&1\\0.5&1\end{bmatrix},A_1 = \begin{bmatrix}0&1\\0.9&-0.95\end{bmatrix},A_2 = \begin{bmatrix}0&1\\1&0.5\end{bmatrix}, \nonumber \\
&B_0 = \begin{bmatrix}1\\2\end{bmatrix}, B_1 = \begin{bmatrix}1.5\\2\end{bmatrix}, B_2 = \begin{bmatrix}1\\0.5\end{bmatrix}, \nonumber \\
&C_0 = \begin{bmatrix}1&0\end{bmatrix}, C_1 = \begin{bmatrix}1&0\end{bmatrix}, C_2 = \begin{bmatrix}1&0\end{bmatrix}, \nonumber \\
&D_0 = D_1 = D_2 = 0.5 \nonumber
\end{eqnarray}
The plant $P_{ex}$ is given as an observability companion form. Although $A_0, A_1$, and $A_2$ have unstable poles, $P_{ex}$ is stable in the meaning of a periodic system. The cyclic reformulation of $P_{ex}$ can be written as follows. 
\begin{equation}
     \check{A} = \left[\begin{array}{ccc}
          O_{2,2} & O_{2,2} & A_{2}\\
          A_0 &  O_{2,2} & O_{2,2}\\
          O_{2,2} & A_{1} &  O_{2,2} 
     \end{array} \right], \check{B} = \left[\begin{array}{ccc}
          O_{2,1} & O_{2,1} & B_{2}\\
          B_0 &  O_{2,1} & O_{2,1}\\
          O_{2,1} & B_{1} &  O_{2,1} 
     \end{array} \right], \nonumber
\end{equation}
\begin{equation}
     \check{C} = \left[\begin{array}{ccc}
          C_0 & O_{1,2} &O_{1,2} \\
          O_{1,2}  & C_1 & O_{1,2}\\
          O_{1,2} & O_{1,2}  & C_{2}
     \end{array} \right], \check{D} = \left[\begin{array}{ccc}
          D_0 & 0 & 0 \\
          0 & D_1 & 0\\
          0 & 0 & D_2
     \end{array} \right]. \nonumber
\end{equation}
where the matrix $\check S_1$ for $P_{ex}$ is given by
\begin{equation}
     \check S_1 = \left[\begin{array}{ccccc}
          0 & 1 & 0\\
          0 & 0 & 1\\
          1 & 0 & 0 
     \end{array} \right]. 
\end{equation}
By calculating (\ref{mar1}), the parameters $\check S_1^i H(i)$ for the plant $P_{ex}$ can be calculated as follows: 
\begin{eqnarray}
&\check H (0) = \left[\begin{array}{ccc}
          0.5 & 0 & 0 \\
          0 & 0.5 & 0\\
          0 & 0 & 0.5
     \end{array} \right], \check S_1 \check H (1) = \left[\begin{array}{ccc}
          1 & 0 & 0 \\
          0 & 1.5 & 0\\
          0 & 0 & 1
     \end{array} \right], \nonumber \\ &\check S_1^2 \check H (2) = \left[\begin{array}{ccc}
          2 & 0 & 0 \\
          0 & 2 & 0\\
          0 & 0 & 0.5
     \end{array} \right], \check S_1^3 \check H (3) = \left[\begin{array}{ccc}
          -1 & 0 & 0 \\
          0 & 2.5 & 0\\
          0 & 0 & 1
     \end{array} \right],\nonumber \\
     &\check S_1^4 \check H (4) =  \left[\begin{array}{ccc}
          1.5 & 0 & 0 \\
          0 & 3.5 & 0\\
          0 & 0 & -0.5
     \end{array} \right], \cdots \nonumber
\end{eqnarray}
We can see that $\check S_1^i \check H(i)$ is given as diagonal matrices as indicated in Lemma \ref{lemma1}. 

By applying an input $u(k)$ as shown in Fig. ~\ref{fig1}, we obtain an output $y(k)$ of the plant $P_{ex}$ as shown in Fig.~\ref{fig2}. Here, we assume $w(k) = 0$ and $v(k) = 0$ in this simulation. Note that the input $u(k)$ is randomly selected for each time step and is not an $M$-periodic signal. 

\begin{figure}[!h]
\centering
\includegraphics[width = 0.85\textwidth]{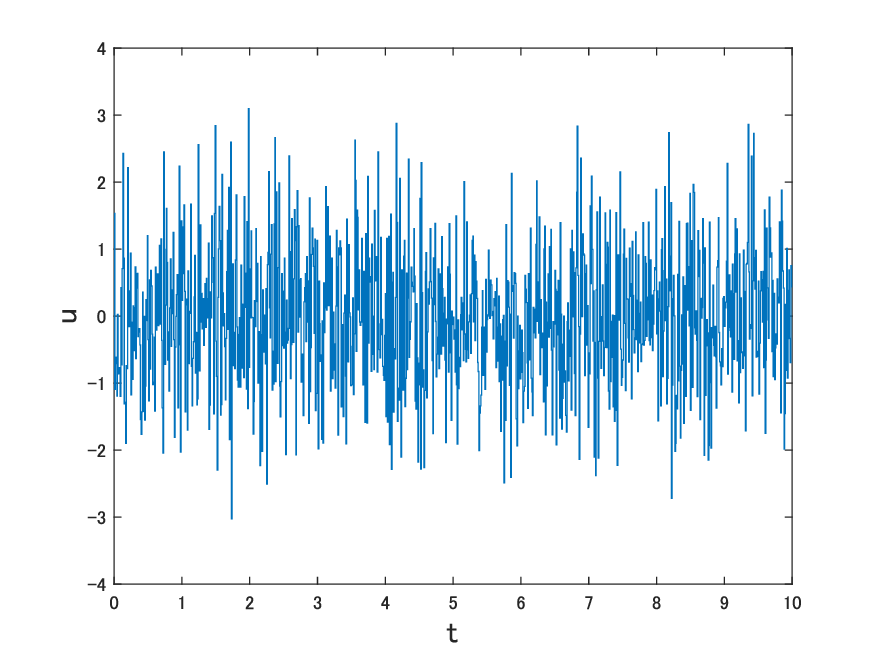}
\caption{Input for $P_{ex}$}
\label{fig1}
\includegraphics[width = 0.85\textwidth]{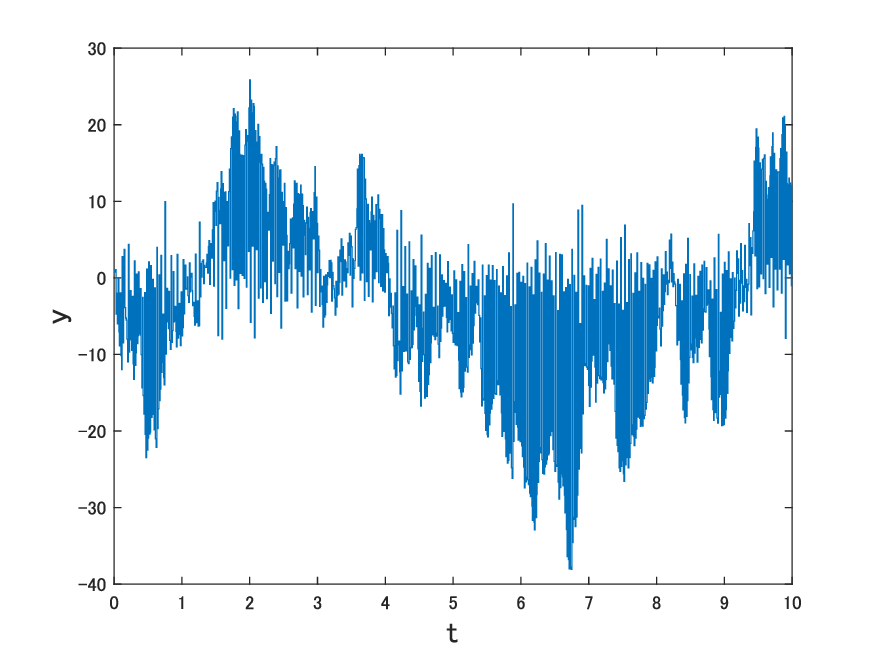}
\caption{Output of $P_{ex}$}
\label{fig2}
\end{figure}

Cycled signals $\check u(k)$ and $\check y(k)$ is obtained using Figs.~\ref{fig1} and \ref{fig2}. Then, a subspace identification method is applied for $\check u(k)$ and $\check y(k)$. We use the N4SID method, which is a kind of subspace identification method, in this simulation. The N4SID method is equipped with the "System Identification Toolbox" on MATLAB and can be easily implemented. 

The derived parameters $\A_*, \B_*, \C_*, \D_*$ by using the subspace identification are given as follows. 
\begin{eqnarray}
&\A_* = \left[\begin{array}{cccc}
   -0.234  &  0.316  &  0.167  &  0.102\\
   -0.427  & -0.397 &   0.427 &   0.315\\
    0.381  &  0.245 &  -0.094 &   0.655 \\
    0.534  & -0.220 &   0.341 &   0.318 \\
   -0.282  & -0.493 &  -0.490 &   0.473 \\
   -0.350  &  0.529 &   0.194 &   0.370 \end{array}\right.\nonumber \\
   & \left.\begin{array}{cc}
   0.720  &  0.253\\
   -0.109  & -0.348\\
    0.094  & -0.461\\
    -0.169  &  0.615\\
   0.153  &  0.214\\
   -0.172  &  0.255\end{array}\right]\label{astar01} 
 \end{eqnarray}
 \begin{eqnarray}
&\B_* = \begin{bmatrix}
   -0.007 &   0.024  & -0.006 \\
    0.018 &   0.016  & -0.000 \\
   -0.013 &   0.004  &  0.014 \\
    0.005 &   0.009  & -0.003 \\
   -0.016 &  -0.012  & -0.010 \\
    0.011 &  -0.027  & -0.000\end{bmatrix} \label{bstar01} 
    \end{eqnarray}
 \begin{eqnarray}
 &\C_* = \left[\begin{array}{cccc}
 -13.068  &  1.899 & 40.245 & -20.713\\ 
   30.059 & 175.521 & 100.751 &  35.170\\ 
  181.753 & -80.460  & 30.300 &  69.929 
\end{array}\right. \nonumber \\
& \left.\begin{array}{cc} -29.419  &  1.150 \\ 115.196 & 106.234 \\ -88.790 & 125.902 
\end{array}\right] \label{cstar01} \\
& \D_* = \begin{bmatrix}
    0.500 &  -0.000  & -0.000 \\
    -0.000 &   0.500  & 0.000 \\
    -0.000 &   0.000  &  0.500
\end{bmatrix}\label{dstar01} 
\end{eqnarray}
We calculate $\check S_1^i \check \Hm (i)$ for obtained $\A_*, \B_*, \C_*, \D_*$ as follows. 
\begin{eqnarray}
&\check \Hm (0) = \left[\begin{array}{ccc}
    0.500 &  -0.000  & -0.000 \\
    -0.000 &   0.500  & 0.000 \\
    -0.000 &   0.000  &  0.500
     \end{array} \right], \nonumber \\ &\check S_1 \check \Hm (1) = \left[\begin{array}{ccc}
          1.000 & 0.000 & 0.000 \\
          -0.000 & 1.500 & 0.000\\
          0.000 & 0.000 & 1.000
     \end{array} \right], \nonumber \\ &\check S_1^2 \check \Hm (2) = \left[\begin{array}{ccc}
          2.000 & 0.000 & 0.000 \\
          -0.000 & 2.000 & -0.000\\
          -0.000 & -0.000 & 0.500
     \end{array} \right], \nonumber \\ &\check S_1^3 \check \Hm (3) = \left[\begin{array}{ccc}
          -1.000 & -0.000 & -0.000 \\
          -0.000 & 2.500 & 0.000\\
          0.000 & 0.000 & 1.000
     \end{array} \right],\nonumber \\
     &\check S_1^4 \check \Hm (4) =  \left[\begin{array}{ccc}
          1.500 & 0.000 & 0.000 \\
          0.000 & 3.500 & 0.000\\
          -0.000 & -0.000 & -0.500
     \end{array} \right], \cdots \nonumber
\end{eqnarray}
Checking for each matrix $S^i \check \Hm (i)$, we can confirm that there are diagonal matrices for all $i$ in this simulation result. Although not shown here, it is confirmed that the diagonal matrix can be obtained in the same way when $i$ is $5$ or more. Therefore, we can confirm that Assumption \ref{ass1} holds. Furthermore, we can also confirm that $S^i\check \Hm (i)$ coincides with $S^i\check H(i)$ for $i=0,\cdots,4$. 

We give $F_j$ as follows: 
\begin{eqnarray}
F_1 = \begin{bmatrix}1\\0\\0\end{bmatrix}, F_2 = \begin{bmatrix}0\\1\\0\end{bmatrix}, F_3 = \begin{bmatrix}0\\0\\1\end{bmatrix}
  \end{eqnarray}
$\check F_j$ is given using (\ref{fj}) and their matrix size are $9 \times 3$. 
By applying step 3. in Algorithm \ref{algo11}, $\A_*, \B_*, \C_*$ and $\D_*$ are transformed by the following $T^{-1}$: 
 \begin{eqnarray}
&T^{-1} = \left[\begin{array}{cccc}
-13.07  &  1.90 &  40.25 & -20.71  \\
  -94.46 & -43.89 &  46.55 & 229.38    \\
   30.06 & 175.52 & 100.75 &  35.17  \\
   21.73 & 191.78 &  84.90 &  39.66   \\
  181.75 & -80.46 &  30.30 &  69.93  \\
   14.38 &  24.64 &   2.40 &   5.53   
\end{array}\right.\nonumber \\
   & \left.\begin{array}{cc}
   -29.42  &  1.15\\
   5.28 & -26.56\\
   115.20 & 106.23\\
   95.49 & 116.07\\
   -88.79 & 125.90\\
   -7.03 & -41.27
   \end{array}\right]. \label{henkanTinv}
   \end{eqnarray}
Since $T^{-1}$ is given as a regular matrix, the state coordinate transformation matrix $T$ is obtained as the inverse matrix of (\ref{henkanTinv}). Furthermore, the following matrices are obtained by applying a state coordinate transformation to (\ref{astar01}), (\ref{bstar01}), (\ref{cstar01}), and (\ref{dstar01}) using the state coordinate transformation matrix $T$.
\begin{eqnarray}
&\check \A = \left[\begin{array}{cccc}
  0.000  &  0.000&   0.000 &  -0.000   \\
    0.000 &   0.000 &   0.000 &  -0.000   \\
    0.000 &   1.000 &  -0.000 &  -0.000    \\
    0.500 &  1.000 &   0.000 &  -0.000   \\
   -0.000 &  -0.000 &  -0.000 &   1.000   \\
    0.000 &  -0.000 &   0.900 &  -0.950   \end{array}\right.\nonumber \\
   & \left.\begin{array}{cc}
    0.000  &  1.000\\
    1.000  &  0.500\\
         0 &  -0.000\\
     0.000 &   0.000\\
    0.000 &  -0.000\\
    -0.000 &  -0.000\end{array}\right]\label{astar02} 
 \end{eqnarray}
 \begin{eqnarray}
&\check \B = \begin{bmatrix}
    0.000 &  -0.000 &   1.000\\
    0.000 &  -0.000 &   0.500\\
    1.000 &  -0.000 &  -0.000\\
    2.000 &  -0.000 &  -0.000\\
   -0.000 &   1.500 &  -0.000\\
    0.000 &   2.000 &  -0.000\end{bmatrix} \label{bstar02} 
    \end{eqnarray}
\begin{eqnarray}
 &\check \C = \left[\begin{array}{cccc}    
 1.000 &  -0.000  & -0.000 &  -0.000    \\
   -0.000  &  0.000  &  1.000 &  -0.000 \\   
    0.000  &  0.000  &  0.000 &  -0.000    
\end{array}\right. \nonumber \\
& \left.\begin{array}{cc} 0.000 &  -0.000 \\ 0.000 &  -0.000 \\ 1.000      &   0.000
\end{array}\right] \label{cstar02} 
\end{eqnarray}
\begin{eqnarray}
& \check \D = \begin{bmatrix}
    0.500 &  -0.000  & -0.000 \\
    -0.000 &   0.500  & 0.000 \\
    -0.000 &   0.000  &  0.500
\end{bmatrix}\label{dstar02}
\end{eqnarray}
Thus, having obtained the cyclic reformulation, we can obtain $\A_k, \B_k, \C_k, \D_k$ as their elements. 
\begin{eqnarray}
&\A_0 = \begin{bmatrix}0.000&1.000\\0.500&1.000\end{bmatrix},\A_1 = \begin{bmatrix}-0.000&1.000\\0.900&-0.950\end{bmatrix},\nonumber \\
&\A_2 = \begin{bmatrix}0.000&1.000\\1.000&0.500\end{bmatrix}, \B_0 = \begin{bmatrix}1.000\\2.000\end{bmatrix},\nonumber \\ &\B_1 = \begin{bmatrix}1.500\\2.000\end{bmatrix}, \B_2 = \begin{bmatrix}1.000\\0.500\end{bmatrix}, \nonumber \\
&\C_0 = \begin{bmatrix}1.000&-0.000\end{bmatrix}, \C_1 = \begin{bmatrix}1.000&-0.000\end{bmatrix}, \nonumber \\ 
&\C_2 = \begin{bmatrix}1.000&0.000\end{bmatrix}, \nonumber \\
&\D_0 = \D_1 = \D_2 = 0.500 \nonumber
\end{eqnarray}
In this case, we can confirm that the obtained parameter matrices $\A_k, \B_k, \C_k, \D_k$ are well approximated with the parameter matrix of $P_{ex}$. Consequently, Problem \ref{prob1} is successfully solved for the case without noise. 

Next, we show the results of system identification using contaminated data. The input signal sequence is generated from a standard normal distribution and the process noise $w(k)$ is generated from the normal distribution whose mean and variance are zero and $1/5$, respectively. 
In this case, $\A_k, \B_k, \C_k, \D_k$ can be obtained by performing system identification using Algorithm \ref{algo11}. Note that Assumption \ref{ass1} is also satisfied for the case containing noise. The obtained model parameters are as follows. 
\begin{eqnarray}
&\check \A = \left[\begin{array}{cccc}
   -0.000 &   0.000 &   0.000 &  -0.000\\  
   -0.000 &   0.000 &  -0.000 &  -0.000  \\
    0.000 &   1.000 &   0.000 &  -0.000 \\
    0.505 &   1.000 &  -0.000 &  -0.000  \\
   -0.000 &  -0.000 &  -0.000 &   1.000  \\
    0.000 &   0.000 &   0.857 &  -0.908 
    \end{array}\right.\nonumber \\
   & \left.\begin{array}{cc}
    -0.000  &  1.000\\
    1.001  &  0.501\\
    -0.000 &  -0.000\\
  0.000 &   0.000\\
   0.000  & -0.000\\
    -0.000 &  -0.000 \end{array}\right]\nonumber 
 \end{eqnarray}
 \begin{eqnarray}
&\check \B = \begin{bmatrix}
   -0.000 &   0.000 &   1.011\\
   -0.000 &   0.000 &   0.501\\
    1.003 &  -0.000 &   0.000\\
    2.023 &   0.000 &   0.000\\
   -0.000 &   1.467 &  -0.000\\
   -0.000 &   1.978 &   0.000
    \end{bmatrix} \nonumber 
    \end{eqnarray}
 \begin{eqnarray}
 &\check \C = \left[\begin{array}{cccc}    
   1.000  &  0.000 & -0.000  &  0.000    \\
    0.000 &   0.000 &   1.000  & -0.000  \\  
   -0.000  & -0.000 &  -0.000  &  0.000    
\end{array}\right. \nonumber \\
& \left.\begin{array}{cc} 0.000 &  -0.000\\ 0.000 &  -0.000 \\ 1.000  &  0.000 
\end{array}\right] \nonumber 
\end{eqnarray}
\begin{eqnarray}
& \check \D = \begin{bmatrix}
    0.530  &  0.000 &   0.000\\
    0.000  &  0.497 &  -0.000\\
   -0.000 &   0.000 &   0.498
\end{bmatrix}
\end{eqnarray}
As a result, it can be confirmed that the obtained parameters are close to those of $P_{ex}$, respectively. In fact, its mean square error value is $0.0691$ and is sufficiently small despite the $20$ percent process noise. 
\section{Conclusion}

A cyclic reformulation based system identification algorithm for linear periodically time-varying plants has been proposed in this paper. First, properties of the Markov parameters of a cyclic reformulation of a periodic time-varying system are derived. The periodic time-varying parameters of a periodic time-varying system can be obtained by using the proposed state coordinate transformation matrix to transform the coordinates of the state-space model obtained using the cycled input and output signals. The effectiveness of the proposed system identification algorithm is verified using numerical examples. We verified that periodic time-varying systems can be identified with high accuracy without any special periodic inputs. 

Note: This work has been submitted to the IEEE for possible publication. Copyright may be transferred without notice, after which this version may no longer be accessible.


\begin{thebibliography}{99}

\bibitem{id00}
L. Ljung: System Identification — Theory for the User, 2nd Edition, PTR Prentice Hall (1999)

\bibitem{id01}
A. J. Helmicki, C. A. Jacobson, C. N. Nett: Control oriented system identification: a worst-case/deterministic approach in $H_\infty$, \emph{IEEE Transactions on Automatic Control}, Vol. 36, Issue 10, 1163-1176 (1991)

\bibitem{id-access1Robust}
Z. Li, L. Ma and Y. Wang: A New Robust Identification Algorithm for Hammerstein-Like System Using Identification Error Structure, \emph{IEEE Access}, Vol. 10, 29121 - 29131 (2022)

\bibitem{id02}
J. M. Bravo, T. Alamo and E. F. Camacho: Bounded Error Identification of Systems With Time-Varying Parameters, \emph{IEEE Transactions on Automatic Control}, Vol. 51, No. 7, 1144-1150 (2006)

\bibitem{id0}M. Verhaegen and X. Yu, A Class of Subspace Model Identification Algorithms to Identify Periodically and Arbitrarily Time-varying Systems, \emph{Automatica}, Vol. 31, No. 2, 201-216 (1995)

\bibitem{id03}
E. S. Tehrani and R. E. Kearney: A non-parametric approach for identification of parameter varying Hammerstein systems, \emph{IEEE Access}, Vol. 10, 6348–6362 (2022)

\bibitem{id1}
W. Yin and A. Saadat Mehr, Identification of linear periodically time-varying systems using periodic sequences, \emph{IEEE Control Applications \& Intelligent Control}, 1455-1459 (2009)

\bibitem{id2}
M. Yin, A. Iannelli and R. S. Smith,
Subspace identification of linear time-periodic systems with periodic inputs, \emph{IEEE Control Systems Letters}, Vol. 5, No. 1, 145-150 (2021)

\bibitem{id3}
J. Goos and R. Rintelon, Continuous-time identification of periodically parameter-varying state space models, \emph{Automatica}, Vol. 71, 254-263 (2016)

\bibitem{id4}
H. Oku: Recursive subspace model identification algorithms for slowly time-varying systems in closed loop, Proceedings of the European Control Conference 2007, 5715-5720 (2007)

\bibitem{id5}
T. Kawaguchi, M. Inoue and S. Adachi: State Estimation under Lebesgue Sampling and an Approach to Event-Triggered Control, \emph{SICE Journal of Control, Measurement, and System Integration}, Vol. 10, Issue 3, 259-265 (2017)

\bibitem{id6}
H. Tanaka and K. Ikeda: Identification of linear stochastic systems taking initial state into account, 56th Annual Conference of the Society of Instrument and Control Engineers, (2017)

\bibitem{id7}
Y. Fujimoto: Kernel Regularization in Frequency Domain: Encoding High-Frequency Decay Property, IEEE Control Systems Letters, Vol. 5, Issue 1, pp. 367-372 (2021)

\bibitem{id8}
Y. Fujimoto, I. Maruta and T. Sugie: Input Design for Kernel-Based System Identification From the Viewpoint of Frequency Response, IEEE Transactions on Automatic Control, Vol. 63, No. 9, pp. 3075-3082 (2018)

\bibitem{id9}
I. Maruta and T. Sugie: Closed-Loop Subspace Identification for Stable/ Unstable Systems Using Data Compression and Nuclear Norm Minimization, IEEE Access, Vol. 10, 21412-21423 (2022)

\bibitem{id10}
J. J. Gude, A. D. Teodoro, O. Camacho, and P. G. Bringas: A New Fractional Reduced-Order Model-Inspired System Identification Method for Dynamical Systems, \emph{IEEE Access}, Vol. 11, 103214 - 103231 (2023)

\bibitem{id11auto}
F. Felici, J.-W. van Wingerden and M. Verhaegen: Subspace identification of MIMO LPV systems using a periodic scheduling sequence, \emph{Automatica}, Vol. 43, No. 10, 1684 - 1697 (2007)

\bibitem{id11tac}
I. Uyanik, U. Saranli, M. M. Ankarali, N. J. Cowan and O. Morgul: Frequency-domain subspace identification of linear time-periodic (LTP) systems, \emph{IEEE Transactions on Automatic Control}, Vol. 64, No. 6, 2529 - 2536 (2019)

\bibitem{P3}
H. Kim, H. Shim, J. Back, and J. Seo, Consensus of multi-agent systems under periodic time-varying network. \emph{Proceedings of the 8th IFAC symposium on nonlinear control systems}, 155--160 (2010)

\bibitem{P2}
H. Okajima, Y. Hosoe, T. Hagiwara, State observer under multi-rate sensing environment and its design using $l_2$-induced norm, \emph{IEEE Access}, Vol.11, 20079-20087 (2023)

\bibitem{P1}
C. Scherer, Mixed $H_2$/$H_\infty$ control for time-varying and linear parametrically-varying systems, \emph{International Journal of Robust and Nonlinear Control}, Vol. 6, 929--952 (1996)

\bibitem{bit}
H. Okajima, Y. Hosoe, T. Hagiwara and Y. Minami: Basic Idea of Periodically Time-Varying Dynamic Quantizer in Networked Control Systems, \emph{2019 58th Annual Conference of the Society of Instrument and Control Engineers of Japan}, 883-889 (2019)

\bibitem{n4sid}P. V. Overschee and  B. D. Moor, N4SID: Subspace Algorithms for the Identification of Combined Deterministic-Stochastic Systems, \emph{Automatica}, Vol.30, No.1, 75--93 (1994)

\bibitem{real}
I. Markovsky, J. Goos, K. Usevich, R. Pintelon, Realization and identification of autonomous linear periodically time-varying systems, \emph{Automatica}, Vol. 50, No. 6, 1632--1640 (2014)

\bibitem{cyc1}
S. Bittanti and P. Colaneri, Invariant representations of discrete-time periodic systems, \emph{Automatica}, Vol. 36, Issue 12, 1777--1793 (2000)

\bibitem{cyc2}
S. Bittanti and P Colaneri, Periodic systems: filtering and control, \emph{Springer-Verlag, London} (2009)

\bibitem{cyc3}
P. Colaneri and V. Kucera, Model matching for periodic systems, \emph{Proc. of the American Control Conference}, 3143--3144 (1997)

\bibitem{cyc4}
Y. Hosoe, M. Miyamoto and T. Hagiwara, Cycling-based synthesis of robust output estimators for uncertain LPTV systems, \emph{SICE Journal of Control, Measurement, and System Integration}, Vol. 12, No. 2, 39--46 (2019)

\bibitem{cyc5}
H. Okajima, Design of Observer-Based Feedback Controller for Multi-Rate Systems With Various Sampling Periods Using Cyclic Reformulation, \emph{IEEE Access}, Vol. 11, 121956-121965 (2023)

\end{thebibliography}
\end{document}